\documentclass[10pt, twocolumn, comsoc]{IEEEtran}

\usepackage{graphicx,epsfig}
\usepackage[noadjust]{cite}
\usepackage{mcite}
\usepackage{amsfonts,helvet}
\usepackage{fancyhdr}
\usepackage{threeparttable}
\usepackage{epsf,epsfig}
\usepackage{amsthm}
\newtheorem{prop}{Proposition}
\usepackage{amsmath}
\usepackage{siunitx}
\usepackage{amssymb}
\usepackage{stfloats}
\usepackage{comment}
\usepackage{booktabs}

\usepackage[colorlinks=true, linkcolor=blue]{hyperref}

\usepackage{dsfont}
\usepackage{subfigure}
\usepackage{color}
\usepackage{enumerate}
\usepackage{gensymb}
\usepackage{cancel}
\usepackage{lipsum}
\usepackage{mathtools}
\usepackage{cuted}
\usepackage{bbm}
\usepackage[linesnumbered,ruled]{algorithm2e}
\usepackage{hyperref}
\usepackage{cleveref}

\newtheorem{remark}{Remark}

\usepackage{eucal}

\begin{document}

\title{Attention-Aided MMSE with Ridge Denoising: How to Train under Noisy Channel Samples}

\author{TaeJun Ha, Hyeji Kim, and Jeonghun Park

\thanks{This work was supported in part by the 6GARROW project, which has received funding from the Smart Networks and Services Joint Undertaking (SNS JU) under the European Union's Horizon Europe research and innovation programme (Grant Agreement No. 101192194), and in part by the Institute for Information \& Communications Technology Planning \& Evaluation (IITP) grants funded by the Korea government (MSIT): (No. RS-2024-00435652), (No. RS-2024-00395824, Development of Cloud Virtualized RAN (vRAN) System Supporting Upper-Midband), and (No. IITP-2025-RS-2024-00428780, 6G Cloud Research and Education Open Hub).

T. Ha and J. Park are with the School of Electrical and Electronic Engineering, Yonsei University, Seoul, South Korea (e-mail:{\texttt{ tjha@yonsei.ac.kr, jhpark@yonsei.ac.kr}}). 

H. Kim is with the Department of Electrical and Computer Engineering at the University of Texas at Austin, Austin, TX, USA (e-mail:{\texttt{ hyeji.kim@austin.utexas.edu}}).
}}

\maketitle \setcounter{page}{1} 
\begin{abstract} 
    Deep neural channel estimators are typically trained with clean channel state information (CSI), which is unavailable in practical orthogonal frequency-division multiplexing (OFDM) systems. In pilot-based OFDM, naive noisy-target training is structurally biased because the pilot input and noisy full-grid target share the same noise realization, driving the estimator toward identity copying. To address this for Attention-aided MMSE (A-MMSE), we propose a ridge-regularized objective that penalizes the generated filter directly. In a stylized fixed-filter model, this penalty induces scalar shrinkage and recovers the scalar MMSE gain at an explicit penalty value. We further construct surrogate training targets by estimating the channel covariance via eigenvalue clipping of the noisy empirical second-moment matrix, without requiring clean CSI labels. On COST 2100 channels, the proposed Ridge-A-MMSE consistently outperforms the Noise2Noise (N2N) baselines considered in this paper, and the combination of ridge regularization and covariance shrinkage approaches the same network trained with clean CSI labels at high signal-to-noise ratios (SNRs).
\end{abstract}

\begin{IEEEkeywords}
    Attention mechanism, Channel estimation, Deep learning, Noisy labels, OFDM, Ridge regression
\end{IEEEkeywords}

\section{Introduction} 

In orthogonal frequency-division multiplexing (OFDM) communication systems, accurate channel estimation is crucial as it underpins coherent demodulation. 
The classical linear minimum mean squared error (LMMSE) estimator requires explicit knowledge of the channel's second-order statistics, which are difficult to obtain in practice. 
Deep neural network (DNN)-based methods circumvent this limitation by learning the channel directly from data. 
These estimators have evolved from end-to-end networks~\cite{2018Ye} and convolutional super-resolution networks~\cite{2019Soltani} to attention-based estimators~\cite{2023Luan, ha2026learningmmsefiltersofdm, guo2024parallel} and to deep generative priors~\cite{arvinte2023score, fesl2024diffusion}. Across these architectural advances, however, the training methodology has remained the same: clean channel state information (CSI) labels, or clean channel realizations for fitting a generative prior,
are assumed available during training.
This assumption is difficult to justify in practical deployments, where only noisy observations of the channel are available through received pilot signals. 
Thus, the relevant problem is clean-label-free training: only received pilot observations and noisy full-grid channel samples,  collected offline in dedicated sounding slots, are available during training, while the desired objective remains clean-channel mean squared error (MSE) minimization.

Networks trained on synthetically designed channels, for which exact labels are available by construction, have been shown to generalize robustly to unseen channels~\cite{2023luanICC, 2026luanTCCN}.
This paper targets the complementary measurement-driven regime, in which site-specific channel statistics must be learned from noisy over-the-air observations.
In this regime, a representative label-free baseline is the Noise2Noise (N2N) principle from image restoration~\cite{2018N2N}: two independent noisy observations of the same scene serve as input and target, yielding unbiased training when the noise realizations are independent.

In pilot-based OFDM, however, only a single noisy snapshot is available per slot~\cite{3GPP-TS-38.211}, so reusing it as both the training input and the training target makes the input noise and the target noise identical at every pilot position, violating the independence condition required by N2N. 
A recent label-free approach beyond N2N~\cite{yu2024bayes} restores input--target noise independence by injecting synthetic noise into a complete noisy channel estimate. In our pilot-based setting, by contrast, the target coincides with the input at every pilot coordinate, and estimation requires interpolation rather than mere denoising.
The noisy-target loss is therefore structurally misaligned with clean-channel MSE minimization: it encourages copying the noisy pilot input rather than denoising it, motivating a training methodology that requires neither clean nor independently observed target labels.

Among DNN-based OFDM channel estimators, Attention-aided MMSE (A-MMSE) is a model-based framework that employs an Attention Transformer to learn a linear MMSE filter, rather than directly mapping pilots to channel estimates~\cite{ha2026learningmmsefiltersofdm}.
Attention captures the global, non-stationary grid correlation that the local, shift-invariant kernels of convolutional super-resolution networks~\cite{2019Soltani} recover only implicitly.
Crucially for clean-label-free training, the network output is the linear filter itself rather than a channel estimate. The filter is thus an explicit, interpretable object that a training objective can penalize directly---a handle that end-to-end convolutional estimators do not expose. 
At inference, channel estimation reduces to a single matrix–vector multiplication with no nonlinear activations.
Despite these advantages, A-MMSE still requires clean CSI labels for training.


In this paper, building on A-MMSE’s linear inference structure, we propose Ridge-A-MMSE for clean-label-free OFDM channel estimation under noisy channel samples. 
Section~\ref{sec: model, form} formalizes the noisy-target bias described above.
The proposed method combines two complementary mechanisms.
First, on the filter side, Ridge-A-MMSE augments the A-MMSE training objective with a Frobenius-norm penalty on the generated linear filter. 
We show that, in the stylized fixed-filter model of Section~\ref{sec: method}, this penalty recovers the scalar MMSE gain at an explicit penalty value.
Second, on the target side, we introduce eigenvalue-clipped covariance shrinkage to construct LMMSE-style plug-in surrogate targets from the noisy full-grid corpus, requiring no clean CSI.
Numerical results on COST 2100 channels~\cite{COST2100} demonstrate that the proposed Ridge-A-MMSE consistently outperforms the considered N2N baselines~\cite{2018N2N} across both semi-urban and high-speed railway scenarios over the full $0$–$35$ dB signal-to-noise ratio (SNR) range. 
Against the N2N-trained counterpart with the identical backbone, the normalized MSE (NMSE) reduction is $57.3\%$ at $20$ dB and reaches $86.5\%$ at $35$ dB.
Since the N2N baselines reuse the same observation as input and target at 35 dB (Section~\ref{sec: results}), the $86.5\%$ margin is measured directly against naive same-slot training, the biased regime analyzed in Section~\ref{sec: model, form}.
At high SNR, the proposed method approaches the network trained with clean CSI labels.

\textbf{Notations}: 
$(\cdot)^{\sf{T}}$ and $(\cdot)^{\sf{H}}$ denote transpose and Hermitian transpose.
$\mathbb{E}[\cdot]$ denotes the expectation operator.
Operators $\operatorname{vec}(\cdot)$ and $\operatorname{diag}(\cdot)$ denote matrix vectorization and the diagonal matrix formed from a vector, respectively.
$\lVert \cdot \rVert_F$ and $\lVert \cdot \rVert_2$ represent the Frobenius norm and $\ell_2$-norm.
$\mathbf{I}_N$ is the $N \times N$ identity matrix, and $\mathbf{e}_i$ denotes the $i^{\mathrm{th}}$ standard basis vector. 
$\mathbf{A} \succeq 0$ indicates that $\mathbf{A}$ is positive semidefinite, and $x \sim \CMcal{CN}(0, \Sigma)$ denotes a complex Gaussian vector with zero mean and covariance matrix $\Sigma$.

\section{Problem Setup and Learning Objective}\label{sec: model, form}
\subsection{System Model and A-MMSE Inference}\label{sec: model & inference}
We consider an OFDM slot with $N$ subcarriers and $M$ symbols. 
The full channel is $\mathbf{h}=\operatorname{vec}(\mathbf{H})\in\mathbb C^{NM}$, where $\mathbf{H} \in \mathbb{C}^(N \times M)$ collects the channel coefficients. 
Let $\CMcal P=\{p_1,\ldots,p_L\}$ denote the fixed set of pilot coordinates, let $\mathbf{h}_p\in\mathbb{C}^L$ be the pilot-restricted channel vector, and let $\mathbf{x}_p \in \mathbb{C}^L$ collect the transmitted pilot symbols. 
With the known pilot matrix
$\mathbf{D}_p=\operatorname{diag}(\mathbf{x}_p)$ and additive pilot noise $\mathbf{z}_p\sim\CMcal{CN}(\mathbf{0},\sigma_n^2\mathbf{I}_L)$,
the received pilot vector is
\begin{align}\label{eq: pilotobs}
    \mathbf{y}_p = \mathbf{D}_p \mathbf{h}_p + \mathbf{z}_p .
\end{align}
For unit-modulus pilots, we equivalently use the de-rotated pilot observation $\mathbf{r}_p=\mathbf{D}_p^{\sf{H}}\mathbf{y}_p=\mathbf{h}_p+\bar{\mathbf{z}}_p$,
where $\bar{\mathbf{z}}_p \triangleq \mathbf{D}_p^{\sf H}\mathbf{z}_p \sim \mathcal{CN}(\mathbf{0}, \sigma_n^2\mathbf{I}_L)$.
A-MMSE generates a linear filter $\mathbf{W}_\theta(\mathbf{y}_p)\in\mathbb{C}^{NM\times L}$ and forms
\begin{align}\label{eq:ammse_est}
    \hat{\mathbf{h}} = \mathbf{W}_{\theta}(\mathbf{y}_p)\, \mathbf{y}_p . 
\end{align}

The filter in~\eqref{eq:ammse_est} is generated by the A-MMSE attention network of~\cite{ha2026learningmmsefiltersofdm}, depicted in the attention detail of Fig.~\ref{fig:pipeline}. The real and imaginary parts of $\mathbf{y}_p$ are embedded into $2L$ tokens.
Learned projections yield $\mathbf{Q},\mathbf{K} \in \mathbb{R}^{2L\times d_k}$ and $\mathbf{V}\in \mathbb{R}^{2L\times d_v}$. 
The attention map $\mathbf{A}=\operatorname{softmax}\!\big(\mathbf{Q}\mathbf{K}^{\mathsf{T}}\!/\sqrt{d_k}\big)\in\mathbb{R}^{2L\times 2L}$ scores pairwise correlation among these pilot components. It is thus a learned, input-adaptive counterpart of the pilot-domain second-order statistics that LMMSE filtering requires. 
The attended features $\mathbf{A}\mathbf{V}$, after the output projection $\mathbf{W}_O$, are decoded into the filter $\mathbf{W}_\theta(\mathbf{y}_p)$ penalized in Section~\ref{sec: method}. Following~\cite{ha2026learningmmsefiltersofdm}, this block is applied in two stages to capture frequency and temporal correlations.

\subsection{Clean-Label-Free Setting and the Noisy-Target Bias}\label{sec: setting and target}
In the clean-label-free setting considered in this paper, the training stage has access only to
\begin{align}\label{eq:dataset}
     \CMcal{D} = \{(\mathbf{y}_p^{(i)},\tilde{\mathbf{h}}^{(i)})\}_{i=1}^{K},\; \tilde{\mathbf{h}}^{(i)} = \mathbf{h}^{(i)} + \mathbf{n}^{(i)},
 \end{align}
where $\mathbf{n}^{(i)}\sim\CMcal{CN} (\mathbf{0},\sigma_n^2\mathbf{I}_{NM})$. 
The noise variance $\sigma_n^2$ is assumed known throughout; unlike the channel statistics, it can be estimated reliably at the receiver.
Such full-grid samples can be collected in an offline phase in which every resource element carries a known pilot (e.g., dedicated sounding slots): de-rotating the received full-grid observation then yields $\tilde{\mathbf{h}}^{(i)} = \mathbf{h}^{(i)} + \mathbf{n}^{(i)}$, while at deployment only the sparse pilot set $\CMcal{P}$ is transmitted.
In the corpus of~\eqref{eq:dataset}, the pilot input is the sounding observation restricted to $\CMcal{P}$, so the label noise at the pilot coordinates is the pilot-noise realization itself; equivalently,
\begin{align}\label{eq: pilot form}
    \tilde{\mathbf{h}}_p^{(i)} = \mathbf{h}_p^{(i)}+\bar{\mathbf{z}}_p^{(i)} = \mathbf{r}_p^{(i)}.
\end{align}

Although only noisy training samples are available, the desired estimator is the one that minimizes the clean-channel MSE. 
In population form, the ideal A-MMSE learning objective is the clean risk
\begin{align}\label{eq:dataset_objective}
    \theta^\star = \arg\min_{\theta} \CMcal{R}_{\mathrm{clean}}(\theta),
    \,\,
    \CMcal{R}_{\mathrm{clean}}(\theta) = \mathbb{E} \left[\left\| \mathbf{h} -  \mathbf{W}_\theta(\mathbf{y}_p)\mathbf{y}_p  \right\|_2^2 \right].
\end{align}
Since $\mathbf{h}^{(i)}$ is unavailable, the problem is to construct a trainable surrogate objective from $\CMcal{D}$ that remains aligned with~\eqref{eq:dataset_objective}.


A natural surrogate replaces the unavailable clean label $\mathbf{h}$ with the noisy observation $\tilde{\mathbf{h}}$:
\begin{align}\label{eq:naive}
    \CMcal{L}_{\text{naive}}(\theta) = \mathbb{E}\left[\lVert \tilde{\mathbf{h}} - \mathbf{W}_\theta(\mathbf{y}_p)\mathbf{y}_p \rVert_2^2\right].
\end{align}
However, $\CMcal{L}_{\mathrm{naive}}(\theta)$ is not simply a noisy estimate of $\CMcal{R}_{\rm clean}(\theta)$. 
By~\eqref{eq: pilot form}, at pilot coordinate $p_i$,
\begin{align}\label{eq:sharednoise}
    r_i=h_{p_i}+\bar z_{p_i},
    \qquad
    \tilde h_{p_i}=h_{p_i}+\bar z_{p_i}=r_i .
\end{align}
Thus, the noisy target equals the input itself at pilot coordinates.

For the scalar same-coordinate estimator $\hat{h}_{p_i}=\alpha r_i$, the naive noisy-target objective gives
\begin{align} \label{eq:naivealpha}
    \alpha_{\mathrm{naive}} =\arg\min_\alpha \mathbb{E} \left[ \left|\alpha r_i-\tilde{h}_{p_i} \right|^2  \right]  = \arg\min_\alpha \mathbb{E} \left[\left| \alpha r_i-r_i \right|^2 \right]  = 1 .
\end{align}
In contrast, the clean-MSE-optimal scalar gain is
\begin{align} \label{eq:mmsealpha}
    \alpha_{\mathrm{MMSE}} = \arg\min_\alpha \mathbb{E} \left[ \left| \alpha r_i-h_{p_i}\right|^2 \right] = \frac{\sigma_h^2}{\sigma_h^2+\sigma_n^2} <1 .
\end{align}
The naive objective thus favors identity copying $\alpha_{\mathrm{naive}}=1$, whereas clean MSE minimization requires shrinkage. This motivates the filter-side ridge penalty in Section~\ref{sec: method} and the target-side covariance-shrinkage surrogate in Section~\ref{sec: processing}.


\section{Ridge-Regularized A-MMSE}\label{sec: method}


Ridge-A-MMSE counters the identity-copying bias of Section~\ref{sec: setting and target} with a filter-side correction: it penalizes the generated A-MMSE filter itself while preserving the inference rule in~\eqref{eq:ammse_est}.

For the raw noisy-target case, this gives
\begin{align}\label{eq:ridgeobj}
    \CMcal{L}_{\mathrm{ridge}}(\theta) = \underbrace{\mathbb{E}\left[\lVert \tilde{\mathbf{h}} - \mathbf{W}_\theta(\mathbf{y}_p)\mathbf{y}_p \rVert_2^2\right]}_{\CMcal{L}_{\mathrm{naive}}(\theta)}
    + \lambda\,\mathbb{E}\left[\lVert \mathbf{W}_\theta(\mathbf{y}_p) \rVert_{\sf F}^2\right],
\end{align}
where $\lambda > 0$ is the ridge parameter. 
This is a Tikhonov-style regularization~\cite{hoerl1970ridge, tikhonov1977solutions} applied directly to the generated filter, not to internal DNN weights; it supplies the shrinkage pressure that the naive objective lacks.

To gain analytical insight into this filter-side correction, we analyze a stylized setting that deliberately abstracts away the sample dependence of the learned filter: a fixed linear filter $\mathbf{W} \in \mathbb{C}^{NM \times L}$, independent of $\mathbf{y}_p$, a single pilot coordinate $p_i \in \CMcal{P}$, and isotropic de-rotated pilot covariance.
The analysis is stated in the de-rotated domain; both the channel estimate and the ridge penalty are invariant to this choice, as shown in Section~\ref{sec: processing}.
Let $\mathbf{w}_k^{\sf{H}}$ denote row $k$ of $\mathbf{W}$, and let $\mathbf{r} \triangleq \mathbf{r}_p = \mathbf{h}_p + \bar{\mathbf{z}}_p$ be the de-rotated pilot vector. From \eqref{eq:sharednoise}, the noisy target at pilot coordinate $p_i$ satisfies $\tilde{h}_{p_i} = r_i$, so the ridge objective at this row reduces to
\begin{align}\label{eq:rowwise}
    \min_{\mathbf{w}_k}\;\mathbb{E}\left[|\mathbf{w}_k^{\sf{H}} \mathbf{r} - r_i|^2\right] + \lambda \lVert \mathbf{w}_k \rVert_2^2.
\end{align}
The following proposition shows that, in this stylized setting, the ridge penalty induces scalar shrinkage and recovers the scalar MMSE gain at an explicit penalty value $\lambda^\star$.

\begin{prop}[Scalar shrinkage in a fixed-filter model]\label{prop:ridge}
Assume $\mathbb{E}[h_k] = \mathbb{E}[n_k] = 0$, $\mathbb{E}[h_k n_k^*]=0$, $\mathbb{E}[|h_k|^2] = \sigma_h^2$, $\mathbb{E}[|n_k|^2] = \sigma_n^2$, and isotropic de-rotated pilot covariance $\mathbf{R}_{\mathbf{rr}}  = \sigma_y^2\mathbf{I}_L$, with $\sigma_y^2 \triangleq \sigma_h^2 + \sigma_n^2.$
For the pilot-output row $k = p_i$ and any $\lambda > 0$, the minimization
in~\eqref{eq:rowwise} admits the unique minimizer
$\mathbf{w}_{p_i,\lambda} = \alpha_\lambda \mathbf{e}_i$, where
$\alpha_\lambda = \sigma_y^2/(\sigma_y^2+\lambda)$.
In this same-coordinate scalar case, the corresponding test MSE is
\begin{align}\label{eq:parabola}
    J(\alpha) = \mathbb{E}\left[|\alpha r_i - h_{p_i}|^2\right] = (1-\alpha)^2\sigma_h^2+\alpha^2\sigma_n^2.
\end{align}
Hence $J(\alpha_{\lambda})<\sigma_n^2$ for every $\lambda>0$ when $\sigma_h^2\le \sigma_n^2$; when $\sigma_h^2 > \sigma_n^2$, the same improvement holds for
\begin{align}\label{eq:lrange}
    0<\lambda<\bar\lambda,\;
    \bar\lambda \triangleq \frac{2\sigma_n^2(\sigma_h^2+\sigma_n^2)}{\sigma_h^2-\sigma_n^2}.
\end{align}
The scalar MMSE point $\alpha_{\mathrm{MMSE}} = \sigma_h^2 / (\sigma_h^2 + \sigma_n^2)$ is recovered at
\begin{align}\label{eq:lstar}
    \lambda^{\star}=\frac{\sigma_n^2(\sigma_h^2+\sigma_n^2)}{\sigma_h^2}.
\end{align}
\end{prop}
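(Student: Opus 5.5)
The plan is to expand the row-wise objective in \eqref{eq:rowwise} as a quadratic form in $\mathbf{w}_k$, solve the first-order condition, and then substitute the resulting gain into the scalar test MSE \eqref{eq:parabola}.

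\textbf{Quadratic form.} Using $\mathbb{E}[\mathbf{r}\mathbf{r}^{\sf H}]=\mathbf{R}_{\mathbf{rr}}=\sigma_y^2\mathbf{I}_L$ and $\mathbb{E}[\mathbf{r}r_i^*]=\mathbf{R}_{\mathbf{rr}}\mathbf{e}_i=\sigma_y^2\mathbf{e}_i$, the objective becomes
\begin{align*}
f(\mathbf{w})=\mathbf{w}^{\sf H}(\sigma_y^2+\lambda)\mathbf{I}_L\mathbf{w}-\sigma_y^2\mathbf{w}^{\sf H}\mathbf{e}_i-\sigma_y^2\mathbf{e}_i^{\sf H}\mathbf{w}+\sigma_y^2 .
\end{align*}
Since $\lambda>0$, the Hessian $(\sigma_y^2+\lambda)\mathbf{I}_L$ is positive definite, so $f$ is strictly convex. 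Setting the Wirtinger gradient to zero therefore gives the unique minimizer
\begin{align*}
\mathbf{w}_{p_i,\lambda}=\frac{\sigma_y^2}{\sigma_y^2+\lambda}\,\mathbf{e}_i=\alpha_\lambda\mathbf{e}_i .
\end{align*}

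\textbf{Test MSE.} With this filter the estimate is $\alpha r_i$. Expanding $|\alpha(h_{p_i}+\bar z_{p_i})-h_{p_i}|^2$ and using zero cross-correlation between channel and noise yields $J(\alpha)=(1-\alpha)^2\sigma_h^2+\alpha^2\sigma_n^2$, which is \eqref{eq:parabola}. The baseline is $J(1)=\sigma_n^2$, i.e., the naive identity copy \eqref{eq:naivealpha}.

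\textbf{Improvement condition.} Compute
\begin{align*}
J(\alpha)-\sigma_n^2=(1-\alpha)\big[(1-\alpha)\sigma_h^2-(1+\alpha)\sigma_n^2\big].
\end{align*}
Because $\alpha_\lambda\in(0,1)$, the sign is set by the bracket, which is negative if and only if
\begin{align*}
\alpha>\frac{\sigma_h^2-\sigma_n^2}{\sigma_h^2+\sigma_n^2}.
\end{align*}
\begin{itemize}
\item If $\sigma_h^2\le\sigma_n^2$, the right-hand side is $\le 0$, so the condition holds for every $\lambda>0$.
\item If $\sigma_h^2>\sigma_n^2$, substitute $\alpha_\lambda=\sigma_y^2/(\sigma_y^2+\lambda)$ with $\sigma_y^2=\sigma_h^2+\sigma_n^2$. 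The condition becomes $(\sigma_h^2+\sigma_n^2)^2>(\sigma_h^2-\sigma_n^2)(\sigma_h^2+\sigma_n^2+\lambda)$. Dividing by $\sigma_h^2-\sigma_n^2$ and using $(\sigma_h^2+\sigma_n^2)-(\sigma_h^2-\sigma_n^2)=2\sigma_n^2$ gives $\lambda<\bar\lambda$ as in \eqref{eq:lrange}.
\end{itemize}

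\textbf{Recovery of the MMSE gain.} Solve $\sigma_y^2/(\sigma_y^2+\lambda)=\sigma_h^2/\sigma_y^2$ for $\lambda$. This gives $\lambda^\star=\sigma_y^2(\sigma_y^2-\sigma_h^2)/\sigma_h^2=\sigma_n^2(\sigma_h^2+\sigma_n^2)/\sigma_h^2$, which is \eqref{eq:lstar}.

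The main place to be careful is the improvement-range algebra, where the factorization and the sign case split on $\sigma_h^2-\sigma_n^2$ must be handled correctly. The first-order step is routine once $\mathbf{R}_{\mathbf{rr}}$ is isotropic.
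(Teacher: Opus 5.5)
Your proposal is correct and follows the same route as the paper's proof sketch. Both solve the first-order condition $(\mathbf{R}_{\mathbf{rr}}+\lambda\mathbf{I}_L)\mathbf{w}=\sigma_y^2\mathbf{e}_i$ under isotropy, expand $J(\alpha)$ using the vanishing channel--noise cross-term, evaluate $J(\alpha_\lambda)-\sigma_n^2$ to get the improvement range, and solve $\alpha_\lambda=\alpha_{\mathrm{MMSE}}$ for $\lambda^\star$; your factorization $(1-\alpha)\bigl[(1-\alpha)\sigma_h^2-(1+\alpha)\sigma_n^2\bigr]$ correctly supplies the algebra the paper leaves implicit.
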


\begin{proof}[Proof sketch]
For the stylized fixed-filter problem with $\mathbf{W}$ independent of $\mathbf{y}_p$, the quadratic objective is separable across the rows of $\mathbf{W}$; thus the analysis reduces to the pilot-output row associated with $p_i$.
With ${\bf{R}}_{{\bf{rr}}} = \sigma^2_y {\bf{I}}_L$, the first-order condition 
\(
(\mathbf{R}_{\mathbf{rr}} + \lambda \mathbf{I}_L)\mathbf{w}_k = \mathbb{E}[\mathbf{r} \mathbf{r}_i^*] = \sigma^2_y \mathbf{e}_i
\)
yields the unique minimizer $\mathbf{w}_{k,\lambda} = \alpha_\lambda \mathbf{e}_i$.
The closed form in \eqref{eq:parabola} follows by expanding the expectation and using $\mathbb{E}[h_{p_i}\bar{z}_{p_i}^{*}] = 0$ to eliminate the cross-term. Evaluating $J(\alpha_\lambda) - \sigma_n^2$ establishes the improvement range, and solving $\alpha_\lambda = \alpha_{\mathrm{MMSE}}$ for $\lambda$ yields $\lambda^\star$.

\end{proof}

\begin{figure}[t] 
    \centering
    \includegraphics[width=0.9\linewidth]{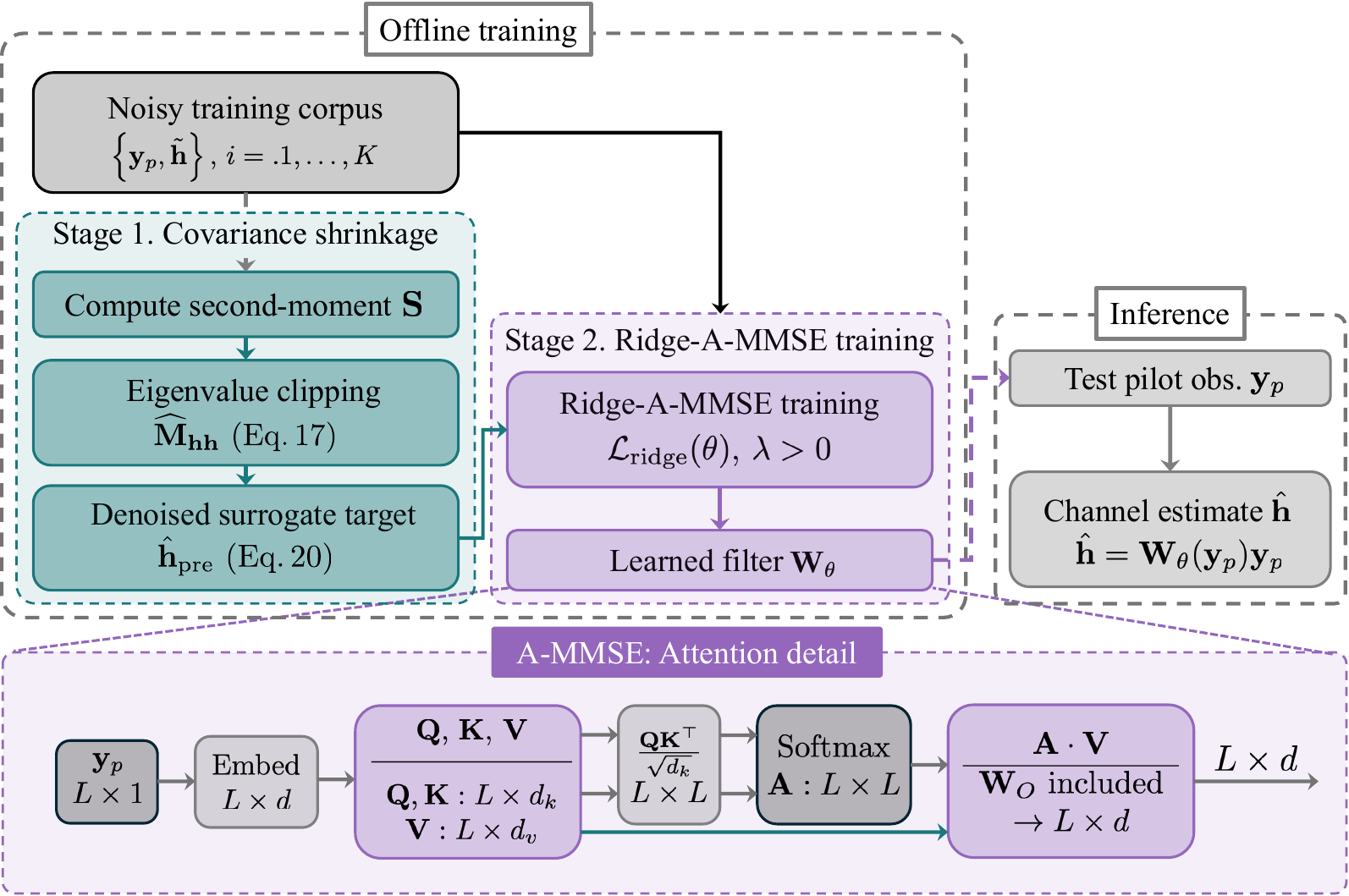} 
    \caption{Overview of the proposed Ridge-A-MMSE training and inference pipeline.}
    \label{fig:pipeline}
\end{figure}

Without the penalty ($\lambda = 0$), the minimizer $\mathbf{w}_{p_i} = \mathbf{e}_i$ copies the noisy pilot and attains $J(1) = \sigma_n^2$ from \eqref{eq:parabola}, reproducing the identity bias of Section~\ref{sec: model, form}. For any $\lambda \in (0, \bar{\lambda})$, the shrinkage $\alpha_\lambda < 1$ strictly improves on this baseline, and the scalar MMSE gain $\alpha_{\mathrm{MMSE}}$ is recovered exactly at $\lambda^\star$.
Since $\lambda^\star / \bar{\lambda} = (\sigma_h^2 - \sigma_n^2)/(2\sigma_h^2) < 1/2$, the MMSE point lies well inside the improvement range, so, within this stylized model, tuning $\lambda$ near $\lambda^\star$ helps rather than hurts on this row.

\begin{remark}\normalfont \label{rem: prop1}
    Proposition~\ref{prop:ridge} is a mechanism-level explanation of why the Frobenius penalty counteracts the identity-copying bias of Section~\ref{sec: model, form}. 
    It provides no optimality guarantee for the learned architecture. In Ridge-A-MMSE, $\mathbf{W}_{\theta}(\mathbf{y}_p)$ is generated by the attention network and varies with the pilot observation, and the de-rotated pilot covariance is in general correlated. 
    The penalty is applied per sample, so the shrinkage pressure acts on every generated filter, although its closed-form characterization holds only in the stylized model. The effect of the penalty under correlated covariances and learned filters is assessed empirically in Section~\ref{sec: results}. 
\end{remark}

During training, $\theta$ is optimized by minimizing~\eqref{eq:ridgeobj}, with $\mathbf{y}_p$ as the input and $\tilde{\mathbf{h}}$ as the label; at inference, the channel estimate is formed exactly as in \eqref{eq:ammse_est}, confining the ridge modification to the training stage.
A label-free rule for choosing $\lambda$, derived from this mechanism, is given in Section~\ref{sec: results}.

\section{Surrogate Targets via Covariance Shrinkage} \label{sec: processing}


The ridge penalty of Section~\ref{sec: method} discourages identity copying but does not remove the noise in $\tilde{\mathbf{h}}$ itself. 
We therefore introduce a complementary target-side correction that constructs an LMMSE-style plug-in surrogate target from the noisy full-grid corpus.
As illustrated in Fig.~\ref{fig:pipeline}, Stage 1 builds the surrogate targets via~\eqref{eq:samplecov} -- \eqref{eq:surrogatetarget}, and Stage 2 trains A-MMSE under the ridge-regularized objective~\eqref{eq: training objective}.




The channel is assumed zero-mean in this section; in practice, the sample mean is subtracted before forming $\mathbf{S}$. From the noisy corpus, we form the sample second-moment matrix
\begin{align}\label{eq:samplecov}
    \mathbf{S} = \frac{1}{K}\sum_{i=1}^{K} \tilde{\mathbf{h}}^{(i)}\tilde{\mathbf{h}}^{(i)\sf{H}}.
\end{align}
Assuming that $\mathbf{h}$ and $\mathbf{n}$ are uncorrelated, 
\begin{align} \label{eq:secondmoment}
    \mathbb{E}[\mathbf{S}] = \mathbf{M}_{\mathbf{h}\mathbf{h}}+\sigma_n^2 \mathbf{I}_{NM},\; \mathbf{M}_{\mathbf{h}\mathbf{h}} \triangleq  \mathbb{E}\left[\mathbf{h}\mathbf{h}^{\sf{H}}\right].
\end{align}
Now let $\mathbf{S} = \mathbf{U}\operatorname{diag}(\mu_1,\ldots,\mu_{NM}) \mathbf{U}^{\sf{H}}$ be the eigendecomposition of $\mathbf{S}$. 
Since~\eqref{eq:secondmoment} is an expectation-level identity, subtracting the nominal noise floor $\sigma_n^2$ from the empirical eigenvalues is a plug-in shrinkage step rather than an exact sample-wise correction.
Empirically, smaller eigenvalues of $\mathbf{S}$ are more likely noise-dominated and may become negative after the subtraction; clipping them to zero enforces positive semidefiniteness:
\begin{align}\label{eq:clip}
    \widehat{\mathbf{M}}_{\mathbf{hh}}=\mathbf{U}\operatorname{diag}\left((\mu_1-\sigma_n^2)_+,\ldots,(\mu_{NM}-\sigma_n^2)_+\right) \mathbf{U}^{\sf{H}},
\end{align}
where $(x)_+ = \max(x, 0)$. The construction is consistent with classical eigenvalue-shrinkage covariance estimators~\cite{ledoit2004well, donoho2018optimal}.

We now use $\widehat{\mathbf{M}}_{\mathbf{hh}}$ to build the surrogate target that replaces $\tilde{\mathbf{h}}$ in the training objective. With the de-rotated pilot vector $\mathbf{r}_p = \mathbf{h}_p +  \bar{\mathbf{z}}_p$ from Section~\ref{sec: model, form}, the MMSE linear map is obtained as
\begin{align}\label{eq:prefilter}
    \mathbf{W}_{\mathrm{pre}}^\star = \arg\min_{\mathbf{W}}\, \mathbb{E}\left[\|\mathbf{h} - \mathbf{W}\mathbf{r}_p\|_2^2\right],
\end{align}
which leads to 
\begin{align}\label{eq:preprocessedfilter}
    \mathbf{W}_{\mathrm{pre}}^\star = \mathbf{M}_{\mathbf{h}\mathbf{h}_p}\left(\mathbf{M}_{\mathbf{h}_p\mathbf{h}_p} + \sigma_n^2 \mathbf{I}_L\right)^{-1},
\end{align}
where $\mathbf{M}_{\mathbf{h}\mathbf{h}_p}$ and $\mathbf{M}_{\mathbf{h}_p\mathbf{h}_p}$ denote the corresponding cross- and pilot-domain blocks of $\mathbf{M}_{\mathbf{hh}}$.
Since only the noisy samples in $\CMcal{D}$ from~\eqref{eq:dataset} are available, the true $\mathbf{M}_{\mathbf{hh}}$ is unknown; plugging in $\widehat{\mathbf{M}}_{\mathbf{hh}}$ from~\eqref{eq:clip} produces a surrogate target of the LMMSE form evaluated at the estimated covariance:

\begin{align}\label{eq:surrogatetarget}
    \hat{\mathbf{h}}_{\mathrm{pre}} = \widehat{\mathbf{M}}_{\mathbf{h}\mathbf{h}_p}\left(\widehat{\mathbf{M}}_{\mathbf{h}_p\mathbf{h}_p} + \sigma_n^2 \mathbf{I}_L\right)^{-1} \mathbf{r}_p.
\end{align}
We write $\widehat{\mathbf{W}}_{\mathrm{pre}}$ for the plug-in filter in \eqref{eq:surrogatetarget}; $\hat{\mathbf{h}}_{\mathrm{pre}} = \widehat{\mathbf{W}}_{\mathrm{pre}}\,\mathbf{r}_p$ is thus a deterministic linear function of the pilot input, yet unlike the raw target of \eqref{eq:sharednoise} it steers the generated filter toward a denoising map rather than identity copying.
The ridge term in \eqref{eq: training objective} excludes an exact copy of this map as the minimizer, while $\mathbf{W}_{\theta}(\mathbf{y}_p)$ remains input-adaptive.
No sample-wise denoising guarantee is claimed; for a fixed $\mathbf{W}$ independent of the noise, $\mathbb{E}\big\|\mathbf{W}\mathbf{r}_p - \mathbf{h}\big\|_2^2 = \mathbb{E}\big\|\mathbf{W}\mathbf{h}_p - \mathbf{h}\big\|_2^2 + \sigma_n^2 \|\mathbf{W}\|_{\mathrm{F}}^2 < \mathbb{E}\big\|\tilde{\mathbf{h}} - \mathbf{h}\big\|_2^2 = NM\sigma_n^2$
once $\sigma_n^2\big(NM - \|\mathbf{W}\|_{\mathrm{F}}^2\big)$ exceeds the pilot interpolation error---a condition more easily met for noisier corpora.
The final Ridge-A-MMSE training objective with covariance-shrinkage surrogate targets is therefore
\begin{align}\label{eq: training objective}
    \theta_{\mathrm{RA}} = \arg\min_{\theta}\!\frac{1}{K}\sum_{i=1}^{K}\!\left[\!\left\|\hat{\mathbf{h}}_{\mathrm{pre}}^{(i)}-\mathbf{W}_{\theta}(\mathbf{y}_p^{(i)})\,\mathbf{y}_p^{(i)}\right\|_2^2 +\lambda\left\|\mathbf{W}_{\theta}(\mathbf{y}_p^{(i)})\right\|_F^2 \right].
\end{align}


The surrogate~\eqref{eq:surrogatetarget} is stated in the de-rotated domain of~\eqref{eq:prefilter}--\eqref{eq:preprocessedfilter}, while the objective~\eqref{eq: training objective} keeps the raw input $\mathbf{y}_p$, matching~\eqref{eq:ammse_est} and~\eqref{eq:ridgeobj}; for unit-modulus pilots $\mathbf{D}_p$ is unitary, so $\mathbf{W}_\theta(\mathbf{y}_p)\mathbf{y}_p = \left[ \mathbf{W}_\theta(\mathbf{y}_p)\mathbf{D}_p \right]\mathbf{r}_p$ and $\lVert \mathbf{W}_\theta(\mathbf{y}_p)\mathbf{D}_p \rVert_F = \lVert \mathbf{W}_\theta(\mathbf{y}_p) \rVert_F$, making both the estimate and the penalty invariant to the domain choice.

\begin{remark}\normalfont
    The clipping in~\eqref{eq:clip} is the Frobenius-norm projection of $\mathbf{S} - \sigma_n^2\mathbf{I}_{NM}$ onto the positive semidefinite cone~\cite{higham1988computing}. Three properties hold at every corpus size $K$. First,~\eqref{eq:secondmoment} gives $\mathbb{E}[\mathbf{S}] - \sigma_n^2\mathbf{I}_{NM} = \mathbf{M}_{\mathbf{hh}}$, so the construction is exact at the population level. 
    Second, the projection is non-expansive, $\|\widehat{\mathbf{M}}_{\mathbf{hh}} - \mathbf{M}_{\mathbf{hh}}\|_F \le \|\mathbf{S} - \mathbb{E}[\mathbf{S}]\|_F$, and clipping does not amplify the finite-sample error of $\mathbf{S}$. 
    Third, $\widehat{\mathbf{M}}_{\mathbf{hh}} \succeq 0$ gives $\widehat{\mathbf{M}}_{\mathbf{h}_p\mathbf{h}_p} + \sigma_n^2\mathbf{I}_L \succeq \sigma_n^2\mathbf{I}_L$, so the inverse in~\eqref{eq:surrogatetarget} exists and has spectral norm at most $\sigma_n^{-2}$.
\end{remark}

\begin{remark}[Complexity]\normalfont
    The additional computation of Ridge-A-MMSE relative to A-MMSE~\cite{ha2026learningmmsefiltersofdm} is confined to offline training. In Stage 1, forming $\mathbf{S}$ in~\eqref{eq:samplecov} costs $\CMcal{O}(K(NM)^2)$, and its eigendecomposition costs $\CMcal{O}((NM)^3)$. Both are performed once per training corpus.
    Constructing the plug-in filter in~\eqref{eq:surrogatetarget} costs $\CMcal{O}((NM)^2L + L^3)$, and generating the $K$ surrogate targets costs $\CMcal{O}(KNML)$. For larger grids, the eigendecomposition can be replaced by a truncated decomposition without altering the training pipeline. In Stage 2, the ridge penalty in~\eqref{eq: training objective} adds $\CMcal{O}(NML)$ per sample, negligible compared with the attention forward and backward passes. At inference, the estimator is identical to~\eqref{eq:ammse_est}.
    The filter is generated by the unchanged A-MMSE backbone and applied through a single $\CMcal{O}(NML)$ matrix--vector product. Ridge-A-MMSE thus incurs no additional inference-time complexity, latency, or memory relative to A-MMSE.
\end{remark}

\section{Simulation Results}\label{sec: results}

\begin{figure*}[t]
    \centering
    \subfigure[NMSE performance on SU scenario.]{
        \includegraphics[width=0.47\textwidth]{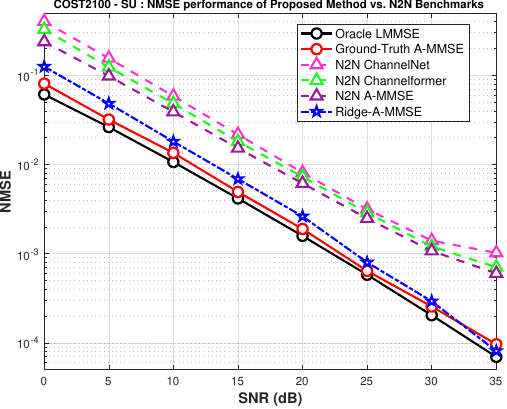}
        \label{fig: NMSE_SU}}
    \hfill
    \subfigure[NMSE performance on HSR scenario.]{
        \includegraphics[width=0.47\textwidth]{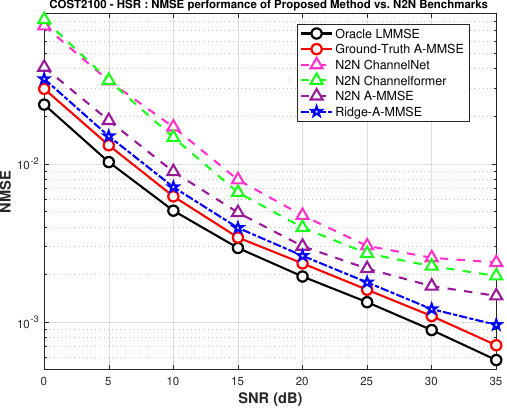}
        \label{fig: NMSE_HSR}}
    \caption{NMSE comparison of the proposed Ridge-A-MMSE and the N2N baselines on COST 2100 channels. N2N baselines require $35$ dB target observations, whereas Ridge-A-MMSE trains only on the noisy corpus at each operating SNR.}
    \label{fig: Results}
\end{figure*}

\textit{\textbf{Setup and baselines:}} The proposed Ridge-A-MMSE is evaluated on two COST 2100 single-user channel scenarios~\cite{COST2100}.
The semi-urban (SU) scenario is configured at $3.5$ GHz carrier frequency, $30$ kHz subcarrier spacing, $1000$ ns delay spread, $40$ km/h user speed, and Rician factor of $3$ dB (maximum Doppler shift $\approx 129.6$ Hz).
The high-speed railway (HSR) scenario operates at $5$ GHz carrier frequency with $60$ kHz subcarrier spacing, $100$ ns delay spread, $350$ km/h user speed, and Rician factor of $13$ dB (maximum Doppler shift $\approx 1620.4$ Hz).
Both scenarios use an OFDM slot of $N = 72$ subcarriers and $M = 14$ symbols, with $L= 72$ pilots following the pilot pattern of~\cite{ha2026learningmmsefiltersofdm}, whose temporal train-validation-test split of $36000$/$4000$/$4000$ slots is also adopted.
All models are trained with Adam~\cite{kingma2015adam} on a corpus of $K = 36000$ noisy slots per scenario and operating SNR, on a single NVIDIA RTX 4090 GPU.
Fig.~\ref{fig: Results} reports the NMSE, defined as $\mathbb{E}\|\hat{\mathbf{h}}-\mathbf{h}\|_2^2 \,/\, \mathbb{E}\|\mathbf{h}\|_2^2$ over the test set, for $0$--$35$ dB SNR in both scenarios.
The proposed Ridge-A-MMSE uses the A-MMSE backbone with both the ridge penalty and the eigenvalue-clipped surrogate targets.

Two categories of reference methods are considered.
Oracle LMMSE serves as a classical linear benchmark that assumes perfect knowledge of the true channel statistics, while Ground-Truth A-MMSE represents the clean-label reference, i.e., the same backbone trained with clean CSI labels; both are idealized and unavailable in practice.
Three N2N baselines --- N2N ChannelNet~\cite{2019Soltani}, N2N Channelformer~\cite{2023Luan}, and N2N A-MMSE --- are trained following the N2N principle~\cite{2018N2N}. 
The two end-to-end deep-learning (DL) baselines are faithful implementations of the original architectures, and N2N Channelformer follows the original offline version. Only the input--output dimensions are adapted to the resource grid and pilot pattern of this paper. 

Each N2N training pair combines the operating-SNR pilot input with a $35$ dB full-grid observation of the same channel realization: the $35$ dB level reflects that target observations would be gathered offline under favorable conditions, while the independent second observation of each realization is an idealization granted to these baselines.
Below $35$ dB, the input and target noise are thus statistically independent, as N2N requires, making the loss unbiased in expectation, although the residual target noise still inflates its finite-sample variance; at $35$ dB, the same observation necessarily serves as both, and N2N reduces to the same-slot regime of Section~\ref{sec: setting and target}.
Ridge-A-MMSE is instead trained at each operating SNR, using the noisy corpus observed at that SNR for both the pilot inputs and the surrogate-target construction; no separately collected observation is required at any operating point.

\textit{\textbf{Ridge parameter:}} The ridge parameter is set per scenario and operating SNR; a label-free rule follows from Proposition~\ref{prop:ridge}, aggregated over the filter rows.
The penalty trades a first-order gain on the pilot rows against a second-order cost elsewhere.
On the $L$ pilot-output rows, the raw target of~\eqref{eq:ridgeobj} coincides with the input by~\eqref{eq:sharednoise}, so a small penalty reduces the clean risk at a rate proportional to $\sigma_n^2$ per row -- the derivative of~\eqref{eq:parabola} at $\alpha = 1$.
On the remaining rows, the target noise is independent of the input, so the objective is already aligned with the clean risk.
The penalty there inflicts only a second-order loss, weighted by the squared row norms.
Balancing the two at high SNR and unit channel power gives $\hat{\lambda} =L\sigma_n^2 / \lVert  \widehat{\mathbf{W}}_{\mathrm{pre}}\rVert_F^2$, a plug-in rule of the classical Hoerl–Kennard–Baldwin form~\cite{hoerl1975ridge} computed from Stage-1 statistics alone.
The values in Fig.~\ref{fig: Results} match $\hat{\lambda}$ within a small constant factor, and the NMSE varies by less than $1$ dB over more than a decade of $\lambda$ around them.


\textit{\textbf{Results:}} In the SU scenario in Fig.~\ref{fig: NMSE_SU}, the proposed method yields improvements over all N2N baselines across the entire SNR range.
Compared with N2N A-MMSE, which shares the same network architecture, Ridge-A-MMSE achieves $53.5\%$ lower NMSE at $10$ dB, $57.3\%$ at $20$ dB, and $86.5\%$ at $35$ dB.
Against the end-to-end DL baselines, Ridge-A-MMSE improves over N2N ChannelNet by $69.0\%$ at $10$ dB and $67.7\%$ at $20$ dB, while the corresponding gains over N2N Channelformer are $62.5\%$ and $64.1\%$.
At $35$ dB, the proposed method reaches an NMSE of $8.094 \times 10^{-5}$, slightly better than the Ground-Truth A-MMSE value of $9.642 \times 10^{-5}$, which we attribute to the implicit regularization provided by the covariance-shrinkage surrogate target and the ridge penalty.
In the HSR scenario in Fig.~\ref{fig: NMSE_HSR}, the proposed method outperforms all N2N baselines, though the relative margins are narrower than in the SU case.
Ridge-A-MMSE achieves $20.1\%$ and $13.0\%$ lower NMSE than N2N A-MMSE at $10$ dB and $20$ dB, respectively, with the improvement increasing to $34.4\%$ at $35$ dB.
Compared with N2N Channelformer, relative gains of $33.8\%$ at $20$ dB and $50.9\%$ at $35$ dB are obtained.

These gains hold under a stricter training regime that requires no independently drawn target observation. 
Because N2N A-MMSE shares the identical backbone, the gap to it isolates the effect of the training methodology alone; the margins over the end-to-end DL baselines additionally include the contribution of the A-MMSE architecture. 
In the SU channel the proposed method approaches the clean-label reference at high SNR, and in the HSR channel it attains the lowest NMSE among the methods trained exclusively from noisy observations.


\textit{\textbf{Ablation and convergence:}} Under the settings of Fig.~\ref{fig: Results} at the full corpus size $K = 36000$, over both scenarios and all operating SNRs, the ridge penalty alone yields $23.8$–$47.9$$\%$ lower NMSE than naive same-slot noisy-target training, and the surrogate targets alone up to $74.1\%$ lower. The full method yields $24.8$–$47.8\%$ lower NMSE than the better single mechanism at every point from $0$ to $30$ dB.
At $35$ dB it remains below the better single mechanism in SU and ties the ridge-only arm in HSR, where the surrogate advantage has narrowed.
Removing the eigenvalue clipping increases the measured NMSE by up to $204.8\%$ at $0$ dB, while the clipping is inactive at $35$ dB.
A further ablation removes the attention encoder from the A-MMSE backbone of~\cite{ha2026learningmmsefiltersofdm}, with all other components and the entire training procedure unchanged. The full method attains $58.5$--$93.0\%$ (SU) and $25.2$--$64.1\%$ (HSR) lower NMSE than the resulting FC-only (fully connected only) variant across the entire SNR range.
Against a parameter-matched variant, in which each attention encoder is replaced by residual fully connected blocks at the total parameter count of the full backbone under identical training, the margin remains $40.7$--$90.3\%$ (SU) and $15.3$--$39.8\%$ (HSR).

The training and validation losses of the proposed method converge stably at $35$ dB in both scenarios, with a final train--validation gap below $4\%$. The converged validation loss, measured against the clean channel, is $86.3\%$ (SU) and $35.3\%$ (HSR) lower than that of naive same-slot noisy-target training.
Re-evaluated against the surrogate target -- the reference computable in the field -- the converged validation loss agrees with its clean-referenced value within about $12\%$ (SU) and $1.2\%$ (HSR), so convergence can be monitored without clean labels. 

\textit{\textbf{Computational cost:}} The recurring computational cost of the proposed method matches that of the A-MMSE backbone. One training epoch over the $36000$-slot corpus takes approximately $30$ s for the A-MMSE variants, $26$ s for N2N ChannelNet, and $18$ s for N2N Channelformer. Stage 1 of the proposed method adds a one-time preprocessing cost on the order of seconds per corpus. 
At inference, every A-MMSE variant applies the trained filter through the single matrix–vector product in~\eqref{eq:ammse_est}. The measured runtime is approximately $0.37$ ms per slot on the RTX 4090, versus approximately $26$ ms and $21$ ms for N2N ChannelNet and N2N Channelformer.
The complete experimental results, including the per-arm ablation and the full set of ridge parameters, are available online~\cite{A-MMSE-GitHub}.

\section{Conclusion}\label{sec: conclusion}


We proposed Ridge-A-MMSE for clean-label-free DNN-based OFDM channel estimation, where training must rely on pilot observations and noisy full-grid channel samples alone.
The method addresses the resulting noisy-target bias from two complementary directions.
On the filter side, ridge regularization penalizes the generated A-MMSE filter and, in a stylized fixed-filter pilot-coordinate model, induces scalar shrinkage that recovers the scalar MMSE gain at an explicit penalty value.
On the target side, eigenvalue-clipped covariance shrinkage constructs LMMSE-style plug-in surrogate targets from the noisy corpus without requiring clean CSI labels.
Simulations on COST 2100 channels showed that the combination of ridge regularization and covariance shrinkage consistently outperforms the considered N2N baselines across all SNRs and approaches Ground-Truth A-MMSE at high SNR. 
Future work includes a theoretical characterization of the label-free ridge rule beyond the stylized setting of Proposition~\ref{prop:ridge}, and an extension of the framework to multi-antenna and multi-user MIMO settings.
Another direction is a generalization study across channel families, combining designed-data pre-training~\cite{2023luanICC, 2026luanTCCN} with the proposed clean-label-free on-site refinement.

\bibliographystyle{IEEEtran}
\bibliography{ridge_AMMSE_ref}

\end{document}